\documentclass[letterpaper]{article}

\ifdefined\aaaianonymous
    \usepackage[submission]{aaai2026}
\else
    \usepackage{aaai2026}
\fi

\usepackage{times}
\usepackage{helvet}
\usepackage{courier}
\usepackage[hyphens]{url}
\usepackage{graphicx}
\usepackage{natbib}
\usepackage{caption}
\usepackage{amsmath}
\usepackage{amssymb}
\usepackage{amsthm}

\usepackage{mathtools}
\usepackage{booktabs}
\usepackage{array}
\usepackage{placeins}
\usepackage{float}
\usepackage{multirow}
\usepackage{enumitem}
\usepackage{tikz}
\newtheorem{theorem}{Theorem}
\newtheorem{corollary}{Corollary}
\newtheorem{proposition}{Proposition}

\theoremstyle{definition}
\newtheorem{definition}{Definition}

\newtheorem{remark}{Remark}

\newcommand{\Lstar}{L^{\star}}
\newcommand{\spreadK}{\mathrm{spread}_K}
\newcommand{\spreadKone}{\mathrm{spread}@K{=}1}
\newcommand{\rec}{\mathrm{rec}}

\newcommand{\clean}{\mathsf{clean}}
\newcommand{\rand}{\mathsf{rand}}
\newcommand{\zero}{\mathsf{zero}}
\newcommand{\cross}{\mathsf{cross}}

\title{Auditing CoT Answer-Hijack Patches:\
Source-Control Certificates with Type-I Guarantees}

\author{Jianwei Tai}
\affiliations{School of Internet, Anhui University\\
24012@ahu.edu.cn}

\begin{document}

\maketitle
\emergencystretch=3em
\setlength{\hfuzz}{2pt}

\begin{abstract}
Chain-of-thought (CoT) answer-hijack templates can flip the final numeric
answer of a 7B--8B language model on GSM8K or MATH-500 even when the visible
reasoning trace looks fluent. Activation patching is the standard probe for
locating where this hijack can be undone, and a successful clean-source patch
is often read as evidence that the patched activation carries the recovered
content. We show that this reading is unsound: clean-only localization
profiles (peak, spread, thresholded band) underidentify the frozen-hook
source contrast, and the clean-only profile is an intervention map, not a
mediation certificate. We then construct an audit that turns each candidate
patch into a source-control certificate with a pre-registered Type-I
guarantee. The certificate runs in three stages: SELECT (clean-source band
sweep with permutation calibration and held-out validation), FREEZE (lock
the hook), and AUDIT (paired-bootstrap source contrasts at the frozen hook).
It emits an incorrect mechanism label with probability at most
$\alpha=\alpha_{\mathrm{sel}}+\alpha_{\mathrm{audit}}$ under sample-split
disjointness. A matching-rate sample-complexity theorem
($n_\star=\Theta(\Delta^{-2}\log(1/\alpha))$) bounds the audit cost. On
Qwen2.5-7B and Llama3-8B, three few-shot/puzzle cells pass confirmatory
$K{=}1$ localization with held-out gaps $+32.6$, $+45.1$, $+17.7$, fixed-hook
reruns recover $47.0\%$ (Qwen-puzzle) and $39.0\%$ (Llama3-puzzle) at
$n{=}100$, and frozen MATH-500 transfer recovers $26.0\%$. After audit,
Llama3-PZ and Qwen-PZ are identity-light with moderate magnitude (Qwen-PZ
also layer-sensitive), Llama3-FS is a single-seed moderate-positive candidate
(multi-seed replication queued), and Qwen-FS is exploratory non-separation
with a layer-sensitive flag. The method is a diagnostic auditing protocol,
not an adaptive safety defense.
\end{abstract}

\section{Introduction}
\label{sec:intro}

Vision-language and language-only models now solve a growing share of
arithmetic and word-problem benchmarks via chain-of-thought (CoT)
prompting~\cite{wei2022cot,cobbe2021gsm8k}, but the visible chain is not a
faithful audit of the underlying computation: biased exemplars induce
plausible but unfaithful rationales~\cite{turpin2023unfaithful}, chain
perturbations often leave final answers
unchanged~\cite{lanham2023faithfulness}, and intermediate tokens should not be
read as reasoning traces~\cite{gundawar2025stop}. Chain-of-Thought Hijacking
shows that benign-looking reasoning wrappers can redirect a model's final
answer~\cite{zhao2025cothijacking}. Together these results push interpretability
toward \emph{interventions on hidden state} rather than on visible text.

Activation patching is the standard such intervention. A clean run and a
hijacked run are aligned, and the hidden state at one layer of the hijacked
run is replaced by its clean-run counterpart for the first $K$ generated
tokens. If the answer flips back to the gold label, the patched layer is
reported as the locus of the hijack. We argue that this report is routinely
overread. A clean-source recovery curve says that one layer is a useful place
to intervene; it does not say whether the same-problem clean activation is
uniquely responsible for recovery, or whether a random Gaussian, a zero
vector, or an activation drawn from a different layer would have worked just
as well. When auditors care about \emph{which} hidden-state property carries
the recovered content (identity, magnitude, or layer position), a
clean-only patch is silent.

This silence is consequential for safety-adjacent interpretability. CoT
answer hijacks are a controlled proxy for prompt-level redirection of model
output: they do not measure harmful-query safety, but they isolate a
recoverable answer-trajectory failure in which hidden-state interventions can
be scored by deterministic numeric recovery. If patch-based interpretability
mislabels such cells (calling fragility ``mediation'', or calling a layer
``causally important'' on the basis of clean-source recovery alone), any
downstream use of the report (auditing a CoT-instruction-tuned policy, fixing
a brittle reasoning circuit, certifying that a hidden-state edit ``fixes'' a
hijack) inherits the same overinterpretation. We therefore formalize the gap
and turn auditing the gap into a pre-registered procedure with a frequentist
guarantee.

Concretely, our paper separates two claims that clean-source patching
routinely entangles: \emph{where} a candidate intervention reduces hijack
incidence, and \emph{what} property of the patched activation drives the
reduction. A layer can disrupt a hijack trajectory even when the exact clean
source is not the operative semantic carrier; a localized cell can also show
a clean-source advantage over carefully matched controls. The first axis is
about intervention utility, the second is about source identity. Conflating
them produces the standard ``localization implies mediation'' overclaim. We
make the gap quantitative by treating the random-source contrast at the
frozen hook as an estimand whose identifiability under clean-only data is
provably impossible, and we make the audit operational by issuing a
source-control certificate at every cell where a clean-source patch is
proposed.

\paragraph{Contributions.}
\begin{enumerate}[leftmargin=*,itemsep=1pt,topsep=2pt]
\item \textbf{Identifiability gap with matching-rate sample complexity.}
Proposition~\ref{prop:nonidentifiability} gives a clean-only worst-case error
$\ge\tfrac{1}{2}$; Theorem~\ref{thm:sample-complexity} closes the gap with a
Bretagnolle-Huber Bayes-risk lower bound and a Hoeffding upper bound matching
in rate at $n_\star=\Theta(\Delta^{-2}\log(1/\alpha))$ for
$\bar p\in[\eta,1{-}\eta]$, $\eta'=\eta-\Delta>0$;
Proposition~\ref{prop:per-axis} shows each of $\rand,\zero,\cross$ is
individually necessary for its Definition~\ref{def:typology} axis.
\item \textbf{SELECT--FREEZE--AUDIT certificate with Type-I control.}
Definition~\ref{def:certificate} + Proposition~\ref{prop:type-one}: a
pre-registered pipeline with $\alpha=\alpha_{\mathrm{sel}}+\alpha_{\mathrm{audit}}$
emits an incorrect Definition~\ref{def:typology} label with probability
$\le\alpha$.
\item \textbf{Cross-architecture empirical instantiation.} Three
few-shot/puzzle cells pass confirmatory $K{=}1$ localization; $n{=}100$
Llama3-PZ identity-light with moderate magnitude (3 seeds), Qwen-PZ
identity-light with moderate magnitude and layer-sensitive (1 seed); the
single positive clean-source advantage cell, Llama3-FS, is a
\emph{single-seed candidate} (moderate positive, moderate magnitude,
layer-sensitive; multi-seed replication queued); Qwen-FS exploratory
non-separation with layer-sensitive flag (3 seeds).
\end{enumerate}

\section{Related Work}
\label{sec:related}
\paragraph{Activation patching and causal mediation.}
Causal mediation analysis in NLP uses activation interventions to ask which internal states causally support an observed behavior. Vig et al.~\cite{vig2020causal} localized gender-bias effects in GPT-2; ROME and MEMIT locate and edit factual associations in mid-layer MLPs~\cite{meng2022rome,meng2023memit}; IOI circuit work and ACDC extend patching to multi-step tasks and automated circuit discovery~\cite{wang2023ioi,conmy2023acdc}; Patchscopes systematizes patch-based inspection across tasks and models~\cite{ghandeharioun2024patchscopes}. Interchange-intervention and causal-abstraction frameworks~\cite{vig2020causal,hase2023localization} formalize patching as a single-condition counterfactual; we do not claim that this line generally equates causal effect with semantic interpretation. The target here is narrower: a clean-only recovery curve can be read as if it licensed a same-problem clean mediator. Our paper separates the localization question from that inference. We prove that clean-source localization alone cannot identify the clean-vs-random source contrast, and the source-control certificate is the minimal multi-condition extension that resolves the typology in Definition~\ref{def:typology}.

\paragraph{Chain-of-thought failures.}
Chain-of-thought prompting improves arithmetic and commonsense reasoning~\cite{wei2022cot}, with GSM8K as a standard benchmark~\cite{cobbe2021gsm8k}. The visible chain, however, need not be a faithful record of the computation: biased few-shot exemplars can induce plausible but unfaithful rationales~\cite{turpin2023unfaithful}, and chain perturbations often leave final answers unchanged~\cite{lanham2023faithfulness}. Sycophancy is a related failure in which models conform to user-stated beliefs over their own answers~\cite{sharma2024sycophancy}. Chain-of-Thought Hijacking~\cite{zhao2025cothijacking} motivates our setting, but we do not benchmark harmful-query safety. We study a controlled numeric proxy in which few-shot, puzzle, and sycophant wrappers cause GSM8K or MATH-500 final answers to disagree with gold labels, allowing hidden-state interventions to be scored by deterministic numeric recovery.

\paragraph{Activation steering and intervention-based safety.}
Activation steering methods add learned or contrastive directions at inference time, including ActAdd~\cite{turner2023actadd}, Inference-Time Intervention~\cite{li2023iti}, and Representation Engineering~\cite{zou2023repe}. Safety-oriented variants identify toxic-concept, refusal, or safety-vector subspaces and steer along them; Circuit Breakers instead modifies representations during fine-tuning~\cite{zou2024circuitbreakers}. Our intervention is different. It does not learn a population-level direction and does not claim an adaptive defense. It performs a layer-local hidden-state replacement for the first $K$ generated tokens and asks which source-control label is licensed by the resulting recovery: detected clean-source advantage, no detected large clean advantage, or no short-patch localization.

\paragraph{Cross-model localization and negative evidence.}
Some circuits recur across architectures, such as induction heads and IOI-style components~\cite{olsson2022induction,wang2023ioi}, but mechanistic universality is incomplete across model families and architectures~\cite{tan2024universality}. Recent critiques also warn that localized causal importance need not equal stored semantic content~\cite{hase2023localization}, and that intermediate tokens should not be anthropomorphized as faithful reasoning traces~\cite{gundawar2025stop}. Our results fit this skeptical line while making it operational: few-shot and puzzle hijacks expose short-patch bands in Qwen2.5-7B and Llama3-8B, sycophant hijacks do not localize at $K{=}1$, and matched source controls show that some localized cells lack a detected clean-source advantage while another shows a positive clean-source advantage under the tested controls.

\paragraph{Layer-aware robustness editing.}
The closest robustness-adjacent comparison is layer-aware code-LLM editing~\cite{lin2025layeraware}, which changes selected transformer layers in CodeGen and CodeLlama to improve robustness against code perturbations. Both lines identify mid-late layers as privileged intervention sites, but the causal object differs: weight editing improves a model under a fixed perturbation benchmark, whereas our inference-time patches test whether recovery from active CoT answer hijacks should be interpreted as mediation. The distinction matters because a layer can be useful for recovery while failing to identify a large same-problem clean-source advantage.

\begin{table}[t]
\caption{Positioning against representative activation-patching frameworks.
``freq.'' denotes whether a pre-registered Type-I bound on a \emph{mechanism
label} is established (vs.\ metric-level empirical CIs, which several of these
frameworks do report).}
\label{tab:positioning}
\centering
\scriptsize
\setlength{\tabcolsep}{3pt}
\begin{tabular}{@{}lcccc@{}}
\toprule
Framework & src. & freq.\ on label & minimax & port. \\
\midrule
Causal abstr. & multi & --- & --- & informal \\
DAS & multi (learned) & --- & --- & --- \\
Patchscopes & multi (template) & --- & --- & --- \\
ROME/MEMIT & one & --- & --- & --- \\
\textbf{Ours} & 3 axes & Type-I $\le\alpha$ & $\Theta(\Delta^{-2}\log\!\tfrac1\alpha)$ & Rem.~\ref{rem:transfer} \\
\bottomrule
\end{tabular}
\par\vspace{1pt}\footnotesize References:
\citet{geiger2021abstractions,wu2024dasdistributed,ghandeharioun2024patchscopes,meng2022rome,meng2023memit}.
\end{table}

\section{Diagnostic Object}
\label{sec:diagnostic-object}

\subsection{K-Shot Patching Setup}

Let $\pi$ be a transformer language model with $L$ layers. For a problem
$q$ with gold numeric answer $a(q)$, the clean prompt produces a response
$y_c \sim \pi(\cdot\mid q)$. A hijack template $T$ produces a wrapped prompt
$q'=T(q)$ and a hijacked response $y_h \sim \pi(\cdot\mid q')$. We evaluate
only qualified problems for which the clean answer is numerically correct and
the hijacked answer is numerically wrong. This qualification step makes
recovery a binary answer-level quantity rather than a formatting metric.

At generation step $t$, let $h_t^{(\ell)}(q')$ be the hidden state of the
hijacked run at layer $\ell$. A $K$-shot patch replaces the first $K$ generated
hidden states at one layer by a source activation:
\begin{equation}
\hat{h}^{(\ell)}_t =
\begin{cases}
\tilde{h}^{(\ell)}_t & \text{if } t \le K, \\
h^{(\ell)}_t(q') & \text{otherwise.}
\end{cases}
\label{eq:kshot-patch}
\end{equation}
The source $\tilde{h}$ can be the same-problem clean run, a matched random
Gaussian source, a zero source, a cross-layer source, a cross-problem clean
source, or a universal donor activation. The standard mediation reading uses
same-problem clean source by default. Our formal and empirical analyses keep
the source variable explicit.

For source $s$, layer $\ell$, and patch length $K$, define the recovery
indicator and recovery rate as
\begin{align}
R_i(\ell,K,s)&=\mathbf{1}\{A(\hat{y}_{i}^{\ell,K,s})=a(q_i)\},
\label{eq:recovery-indicator}\\
p_s(\ell,K)&=\mathbb{E}_i[R_i(\ell,K,s)],
\label{eq:recovery-rate}
\end{align}
where $A$ is deterministic numeric answer extraction. Clean-source layer
spread at fixed $K$ is
\begin{equation}
\spreadK = \max_{\ell}p_{\clean}(\ell,K)-\min_{\ell}p_{\clean}(\ell,K),
\label{eq:spread}
\end{equation}
and $\Lstar=\arg\max_{\ell}p_{\clean}(\ell,K)$ is a representative peak.
Because $\Lstar$ can be selection-sensitive, we treat thresholded bands as the
stable estimand:
\begin{equation}
\mathcal{B}_{\clean}(K;\tau)=\{\ell:p_{\clean}(\ell,K)\ge \tau\}.
\label{eq:clean-band}
\end{equation}

\subsection{Identifiability Gap}

The missing estimand is the post-selection source contrast. Clean-source
localization observes only one source condition; mechanism interpretation also
requires asking whether the frozen hook recovers under random, zero, cross-layer,
cross-problem, or donor sources. At a selected layer or band, define the source contrast
\begin{equation}
D_s(\ell,K)=p_{\clean}(\ell,K)-p_s(\ell,K).
\label{eq:source-contrast}
\end{equation}
Large positive $D_{\rand}$ or $D_{\zero}$ is a necessary-condition signal for
same-problem source advantage under the tested controls; it is not by itself a
proof of semantic mediation. We use a practical margin $\delta$ to make
identity-light labels falsifiable, and report margin sensitivity rather than
treating one threshold as intrinsic to the task.

\begin{definition}[Mechanism-label typology, multi-axis]
\label{def:typology}
Fix a frozen hook $(\ell_g,K)$, margin $\delta>0$, and target AUDIT-stage
Type-I level $\alpha_{\mathrm{audit}}\in(0,1/2)$. Let
$\overline{D}_{s},\underline{D}_{s}$ denote the
level-$(1-\alpha_{\mathrm{audit}}/3)$ \emph{paired-bootstrap percentile}
upper/lower bounds on
$D_s(\ell_g,K)$ for $s\in\{\rand,\zero,\cross\}$ (single binomial recovery
rates use Wilson). A cell with held-out band gap
$\widehat{\Delta}_{\mathrm{ho}}>0$ whose Wilson 95\% CI excludes $0$ is labeled
by three symmetric axes:
\begin{enumerate}[leftmargin=*,itemsep=2pt,topsep=2pt]
\item \emph{Identity axis} (priority: identity-light $\succ$ strong $\succ$
moderate $\succ$ wide-CI). \emph{Identity-light} if $\overline{D}_{\rand}\le\delta$
(boundary $\overline{D}_{\rand}=\delta$ counts as identity-light);
\emph{strong positive advantage} if $\underline{D}_{\rand}\ge\delta$
(boundary $\underline{D}_{\rand}=\delta$ counts as strong);
\emph{moderate positive advantage} if $\underline{D}_{\rand}>0$ and the CI
brackets $\delta$; \emph{wide-CI} otherwise.
\item \emph{Magnitude flag.} \emph{Strong magnitude} if $\underline{D}_{\zero}\ge\delta$
(boundary $\underline{D}_{\zero}=\delta$ counts as strong); otherwise
\emph{moderate magnitude} if $\underline{D}_{\zero}>0$ (covering both
CI-brackets-$\delta$ and CI-strictly-below-$\delta$ regimes); no flag if
$\underline{D}_{\zero}\le0$.
\item \emph{Layer flag.} \emph{Layer-sensitive} if $\underline{D}_{\cross}>0$.
\end{enumerate}
\emph{Exploratory non-separation} overrides the identity axis only when
$\widehat{\Delta}_{\mathrm{ho}}\le0$, the held-out gap CI crosses zero, the
identity CI is too wide to compute at level $1-\alpha/3$, or sample size is
insufficient; magnitude and layer flags remain reportable when their CI
conditions are met. Each cell inherits one identity label plus optional
magnitude/layer flags (e.g.\ \emph{identity-light, moderate magnitude}).
\end{definition}

\begin{proposition}[Clean-only equivalence class and identifiability floor]
\label{prop:nonidentifiability}
Fix a patch length $K$ and a finite layer set $\mathcal{L}$. For any
clean-source recovery profile $p_{\clean}\in[0,1]^{|\mathcal{L}|}$, define the
\emph{clean-only equivalence class}
\begin{equation}
\resizebox{0.92\linewidth}{!}{$\mathcal{E}(p_{\clean})=\big\{\mathcal{M}:\mathbb{P}_{\mathcal{M}}\{R(\ell,K,\clean)=1\}=p_{\clean}(\ell,K)~\forall\ell\big\}.$}
\label{eq:equivalence-class}
\end{equation}
Let $g$ be any clean-only localization rule, $\ell_g\in g(p_{\clean})$ with
$p_{\clean}(\ell_g,K)>0$, and $\delta\in(0,p_{\clean}(\ell_g,K))$. For any
clean-only decision rule $\phi$ that maps $p_{\clean}$ to the binary hypothesis
$\{D_{\rand}(\ell_g,K)\le\delta\}$ vs.\ $\{D_{\rand}(\ell_g,K)>\delta\}$,
\begin{equation}
\sup_{\mathcal{M}\in\mathcal{E}(p_{\clean})}
\mathbb{P}_{\mathcal{M}}[\,\phi\text{ errs}\,]\;=\;1
\quad(\phi\text{ deterministic}),
\label{eq:det-floor}
\end{equation}
and the randomized minimax error satisfies
$\inf_{\phi}\sup_{\mathcal{M}\in\mathcal{E}(p_{\clean})}\mathbb{P}_{\mathcal{M}}[\phi\text{ errs}]\ge\tfrac{1}{2}$.
\end{proposition}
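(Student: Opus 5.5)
The plan is a two-point construction inside the equivalence class $\mathcal{E}(p_{\clean})$. Fix $p_{\clean}$, the selected hook $\ell_g$, and $\delta\in(0,p_{\clean}(\ell_g,K))$. We build two models $\mathcal{M}_0,\mathcal{M}_1\in\mathcal{E}(p_{\clean})$ that have the same clean-source recovery law at every $\ell\in\mathcal{L}$, but for which $D_{\rand}(\ell_g,K)\le\delta$ under $\mathcal{M}_0$ and $D_{\rand}(\ell_g,K)>\delta$ under $\mathcal{M}_1$. A clean-only rule $\phi$ (and the localization rule $g$) sees only $p_{\clean}$, so it returns the same output, or the same output distribution if randomized, on both models. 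The stated bounds then follow from the usual two-point (Le Cam) argument.

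\textbf{Construction.} We treat a model $\mathcal{M}$ abstractly as a joint law over the recovery indicators $R(\ell,K,s)$ for all sources $s$. This is legitimate because the class in \eqref{eq:equivalence-class} constrains only the clean marginals. In both models the clean-source indicators are Bernoulli$(p_{\clean}(\ell,K))$ for each $\ell$, which places both in $\mathcal{E}(p_{\clean})$. The two models differ only in the random-source marginal at $\ell_g$:
\begin{itemize}
\item In $\mathcal{M}_0$, set $p_{\rand}(\ell_g,K)=p_{\clean}(\ell_g,K)$. Then $D_{\rand}=0\le\delta$.
\item In $\mathcal{M}_1$, set $p_{\rand}(\ell_g,K)=0$. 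Then $D_{\rand}=p_{\clean}(\ell_g,K)>\delta$.
\end{itemize}
This is exactly where the hypotheses $p_{\clean}(\ell_g,K)>0$ and $\delta<p_{\clean}(\ell_g,K)$ are used. If one wants these models to be realizable as actual patching mechanisms, $\mathcal{M}_0$ is a hook where any perturbation of that norm disrupts the hijack trajectory, and $\mathcal{M}_1$ is a hook where only the same-problem clean content restores the answer. The proposition only requires membership in $\mathcal{E}$, so the abstract construction suffices.

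\textbf{Deterministic floor.} A deterministic $\phi$ outputs a fixed label $\phi(p_{\clean})$, which is the same under $\mathcal{M}_0$ and $\mathcal{M}_1$. Whichever label it outputs is wrong for one of the two models, and the error there happens with probability $1$. Hence the supremum over $\mathcal{E}$ equals $1$; it cannot exceed $1$ because it is a probability.

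\textbf{Randomized floor.} Let $\phi$ output ``$\le\delta$'' with probability $r$, which depends only on $p_{\clean}$. The error is $1-r$ under $\mathcal{M}_0$ and $r$ under $\mathcal{M}_1$. The worst case is therefore $\max(r,1-r)\ge\tfrac12$, and taking the infimum over $\phi$ gives the stated bound.

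\textbf{Main obstacle.} The only delicate point is interpretive, not computational. We must make explicit that $\mathcal{E}(p_{\clean})$ places no constraint on non-clean source marginals. We must also make explicit that both $g$ and $\phi$ are functions of $p_{\clean}$ alone, so that the selected $\ell_g$ is the same in both models and the label is evaluated at the same hook. With those two points stated, the rest is the immediate two-point argument above.
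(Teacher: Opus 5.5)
Your proof is correct and is essentially the paper's argument: the paper builds the same two-point pair in $\mathcal{E}(p_{\clean})$, differing only in the random-source marginal at $\ell_g$, with $\mathbb{P}\{R(\ell_g,K,\rand)=1\}\in\{p_{\clean}(\ell_g,K)-\delta',\,p_{\clean}(\ell_g,K)\}$ for $\delta'\in(\delta,p_{\clean}(\ell_g,K)]$, and your $p_{\rand}=0$ model is the endpoint case $\delta'=p_{\clean}(\ell_g,K)$. Your explicit $\max(r,1-r)\ge\tfrac12$ computation for the randomized floor is a tighter rendering of the paper's one-line appeal to randomized rules being mixtures of deterministic ones.
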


\begin{proof}
For $\mathcal{M}\in\mathcal{E}(p_{\clean})$, the clean-only marginals are fixed
by~\eqref{eq:equivalence-class}, so $\phi$ is constant on $\mathcal{E}(p_{\clean})$.
Construct $\mathcal{M}_A,\mathcal{M}_B\in\mathcal{E}(p_{\clean})$ with binary
potential outcomes:
$\mathbb{P}_{\mathcal{M}_A}\{R(\ell_g,K,\rand)=1\}=p_{\clean}(\ell_g,K)-\delta'$
with $\delta'\in(\delta,p_{\clean}(\ell_g,K)]$, and
$\mathbb{P}_{\mathcal{M}_B}\{R(\ell_g,K,\rand)=1\}=p_{\clean}(\ell_g,K)$; other
non-clean entries are arbitrary in $[0,1]$. Then $\mathcal{M}_A$ and
$\mathcal{M}_B$ lie on opposite sides of the $\delta$-margin hypothesis but
produce identical $\phi$-input, so any deterministic $\phi$ misclassifies one
of $\{\mathcal{M}_A,\mathcal{M}_B\}$ deterministically (sup-error $=1$). The
randomized bound follows because every randomized rule is a mixture of
deterministic rules.
\end{proof}

\begin{corollary}[At least one non-clean source is necessary]
\label{cor:source-controls}
For any margin $\delta>0$, $\mathcal{E}(p_{\clean})$ contains both
identity-light and positive-advantage models, so identifying any of
labels~1--3 of Definition~\ref{def:typology} requires at least one non-clean
source intervention at the frozen hook.
\end{corollary}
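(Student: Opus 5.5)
The plan is to reduce the corollary to the two-model construction in the proof of Proposition~\ref{prop:nonidentifiability}, and then add a small modification for each axis of Definition~\ref{def:typology}. Fix $p_{\clean}$, a frozen hook $(\ell_g,K)$, and a margin $\delta>0$. For the identity axis (label~1), take $\mathcal{M}_B$ exactly as in that proof, with $p_{\rand}(\ell_g,K)=p_{\clean}(\ell_g,K)$, so $D_{\rand}=0\le\delta$ and the model is identity-light. Next I will build $\mathcal{M}_A$ with $p_{\rand}(\ell_g,K)=p_{\clean}(\ell_g,K)-\delta'$ for some $\delta'\in(\delta,p_{\clean}(\ell_g,K)]$, so $D_{\rand}>\delta$ and the model has a positive (indeed strong) advantage. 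This needs $p_{\clean}(\ell_g,K)>\delta$. When $p_{\clean}(\ell_g,K)\le\delta$, I will instead read ``positive-advantage'' as the moderate regime: choose $p_{\rand}=p_{\clean}-\epsilon$ with small $\epsilon>0$, which needs only $p_{\clean}(\ell_g,K)>0$. I will state this caveat explicitly, because the corollary quantifies over every $\delta>0$, while the proposition assumed $\delta<p_{\clean}(\ell_g,K)$.

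The key step is membership in $\mathcal{E}(p_{\clean})$. By \eqref{eq:equivalence-class}, the class constrains only the clean-source marginals $\mathbb{P}_{\mathcal{M}}\{R(\ell,K,\clean)=1\}$. I will therefore realize each model as a product of independent Bernoulli potential outcomes. The clean coordinates are fixed to $p_{\clean}(\ell,K)$, and the non-clean coordinates (rand, zero, cross at $\ell_g$) are set freely in $[0,1]$. Independence guarantees that the specification is consistent and that no clean marginal is changed. Any data or statistic available to a clean-only rule is a function of clean-source outcomes only. Its law therefore coincides under $\mathcal{M}_A$ and $\mathcal{M}_B$, so any such rule, deterministic or randomized, returns the same output distribution on both models.

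The magnitude and layer flags (labels~2 and~3) are handled by the same device, varying $p_{\zero}$ and $p_{\cross}$ respectively. Setting $p_{\zero}(\ell_g,K)=p_{\clean}(\ell_g,K)$ versus $p_{\zero}(\ell_g,K)=0$ gives $D_{\zero}=0$ versus $D_{\zero}=p_{\clean}(\ell_g,K)>0$, that is, no flag versus a flag. The same choices for $p_{\cross}$ give the layer flag or its absence. Proposition~\ref{prop:nonidentifiability}'s argument then applies verbatim to each of these binary hypotheses. A deterministic clean-only rule misclassifies one of the two models with probability~$1$. A randomized rule errs with probability at least $\tfrac12$ on one of them, since its two error probabilities sum to~$1$.

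To finish, I will argue the contrapositive. Suppose a procedure identifies a given label, meaning its error probability is uniformly below $\tfrac12$ over $\mathcal{E}(p_{\clean})$. Then its input cannot be a function of clean-source outcomes alone, so it must observe at least one non-clean source at $(\ell_g,K)$. Observations at other hooks are irrelevant here, because the constructed models may also share all non-clean values away from $\ell_g$. I expect the main obstacle to be the boundary case $p_{\clean}(\ell_g,K)\le\delta$ discussed above, together with stating ``identifying a label'' precisely as a population-level decision problem. I will handle the latter by reusing the proposition's framing, where the input is the profile itself, rather than finite-sample Wilson or bootstrap bounds.
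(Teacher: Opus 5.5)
Your main construction is the paper's own route. For the identity axis you use the pair $\mathcal{M}_A,\mathcal{M}_B$ from the proof of Proposition~\ref{prop:nonidentifiability}. For the magnitude and layer flags you apply the same device to $D_{\zero}$ and $D_{\cross}$, which is what Proposition~\ref{prop:per-axis} does. The product-Bernoulli realization, the randomized $\tfrac12$ argument and the contrapositive are all fine.

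\textbf{The gap} is your patch for the regime $p_{\clean}(\ell_g,K)\le\delta$. Your model with $p_{\rand}=p_{\clean}-\epsilon$ has $D_{\rand}=\epsilon<\delta$, so under Definition~\ref{def:typology} it is identity-light, not moderate positive. At the population level its interval degenerates to the point $\epsilon$. Then $\overline{D}_{\rand}=\epsilon\le\delta$ triggers identity-light, which has top priority. A degenerate interval also cannot bracket $\delta$, which the moderate label requires. It therefore carries the same identity label as $\mathcal{M}_B$. What you have actually separated is $D_{\rand}=0$ from $D_{\rand}>0$, a margin-free question rather than label~1.

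No relabeling can rescue this case. Since $p_{\rand}(\ell_g,K)\ge 0$, every $\mathcal{M}\in\mathcal{E}(p_{\clean})$ satisfies $D_{\rand}(\ell_g,K)\le p_{\clean}(\ell_g,K)\le\delta$. The identity label is therefore constant on the class, and the clean-only rule that always outputs identity-light never errs.

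The correct resolution is to restrict the statement rather than reinterpret it. For label~1, the corollary holds exactly under Proposition~\ref{prop:nonidentifiability}'s hypothesis $0<\delta<p_{\clean}(\ell_g,K)$. The literal ``for any $\delta>0$'' claim, that $\mathcal{E}(p_{\clean})$ always contains a positive-advantage model, is false otherwise. Your constructions for labels~2 and~3, setting $D_{\zero}$ or $D_{\cross}$ to $0$ versus $p_{\clean}(\ell_g,K)$, do work for every $\delta>0$; they need only $p_{\clean}(\ell_g,K)>0$, which is already part of the proposition's setup. State and prove the result in that split form instead of claiming the full ``any $\delta>0$'' version.
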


\begin{proposition}[Per-axis necessity]
\label{prop:per-axis}
In the binary potential-outcome model, the joint of
$(R_{\clean},R_{\rand},R_{\zero},R_{\cross})$ at $(\ell_g,K)$ admits any choice
of four source marginals subject to Fr\'echet--Hoeffding compatibility (any
three marginals fixed leave the fourth free in a non-degenerate interval, by
Boole--Fr\'echet bounds on binary marginals), so any three marginals can be
fixed while the fourth varies freely. Hence for each axis
$a\in\{\mathrm{identity},\mathrm{magnitude},\mathrm{layer}\}$ paired with
diagnostic source $s(a)\in\{\rand,\zero,\cross\}$, dropping $s(a)$ leaves axis
$a$ unidentifiable: applying Proposition~\ref{prop:nonidentifiability} to
$D_{s(a)}$ while fixing the other two source marginals constructs
$\mathcal{M}_A,\mathcal{M}_B\in\mathcal{E}(p_{\clean})$ with opposite axis-$a$
labels. Each source is therefore individually necessary.
\end{proposition}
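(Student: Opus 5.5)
Our plan is to reduce each axis to a two-model indistinguishability argument of exactly the type used in Proposition~\ref{prop:nonidentifiability}. The first step is the marginal-freedom claim. Consider four binary random variables $(R_{\clean},R_{\rand},R_{\zero},R_{\cross})$ at $(\ell_g,K)$. Any vector of marginals $(p_{\clean},p_{\rand},p_{\zero},p_{\cross})\in[0,1]^4$ is realized by some joint law, for instance the independent product coupling, or the comonotone coupling $R_s=\mathbf{1}\{U\le p_s\}$ with a single $U\sim\mathrm{Unif}[0,1]$. The Fr\'echet--Hoeffding bounds only restrict the \emph{pairwise or joint} cell probabilities, not the marginals themselves. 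Hence fixing any three marginals leaves the fourth free over all of $[0,1]$. This is stronger than the ``non-degenerate interval'' phrasing in the statement, and we only need that interval to contain a point on each side of the relevant threshold.

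The second step treats each axis in turn. Fix $p_{\clean}(\ell_g,K)=c>0$, and fix the two marginals not paired with $a$ at arbitrary values. The candidate constructions are:
\begin{itemize}[leftmargin=*,itemsep=1pt,topsep=2pt]
\item \emph{Identity axis} ($s=\rand$): take $\mathcal{M}_B$ with $p_{\rand}=c$, so $D_{\rand}=0\le\delta$ (identity-light). Take $\mathcal{M}_A$ with $p_{\rand}=c-\delta'$ for some $\delta'\in(\delta,c]$, so $D_{\rand}>\delta$ (strong advantage). This is exactly the construction in the proof of Proposition~\ref{prop:nonidentifiability}, now with $p_{\zero},p_{\cross}$ held fixed.
\item \emph{Magnitude axis} ($s=\zero$): use $p_{\zero}=c$, giving $D_{\zero}=0$ and no flag, versus $p_{\zero}=c-\delta'$, giving $D_{\zero}>\delta$ and strong magnitude.
\item \emph{Layer axis} ($s=\cross$): use $p_{\cross}=c$, giving no flag, versus $p_{\cross}=0$, giving $D_{\cross}=c>0$ and layer-sensitive.
\end{itemize}
In each case both models lie in $\mathcal{E}(p_{\clean})$. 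The other clean layers' marginals can be set identically in both models, since only $(\ell_g,K)$ entries differ.

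The third step is the indistinguishability argument. A decision rule that observes the clean profile plus the two retained sources sees identical input distributions under $\mathcal{M}_A$ and $\mathcal{M}_B$, because those marginals coincide. If the rule sees paired data, we equalize the retained joint by choosing the independent coupling for the dropped source in both models. Since the two models receive opposite population labels on axis $a$, the same mixture argument as in Proposition~\ref{prop:nonidentifiability} gives deterministic sup-error $1$ and randomized minimax error $\ge\tfrac12$. It follows that source $s(a)$ is necessary for axis $a$.

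The main obstacle is making precise what the rule observes. If the rule sees finite-sample paired draws of the retained sources, we must ensure the full joint law of the \emph{observed} coordinates is identical across $\mathcal{M}_A,\mathcal{M}_B$, not just the marginals. Our first attempt will be to define both models as the product of a fixed joint over the observed coordinates and an independent Bernoulli for the dropped source. Under this construction the observed law is literally the same, and the paper's Fr\'echet remark is only needed to confirm compatibility, which is trivial for independent couplings.

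A secondary point is the labels defined through CI bounds rather than population contrasts. Here we interpret the axis labels at the population level, where the CI bounds collapse to $D_s$, as is implicit in Corollary~\ref{cor:source-controls}. The condition $\delta<c$ is needed for the identity and magnitude axes; for the layer axis, $c>0$ suffices.
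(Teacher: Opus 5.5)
Your proposal is correct and takes the paper's own route. The paper's argument is embedded in the statement itself: marginal freedom, then the two-point construction of Proposition~\ref{prop:nonidentifiability} applied to $D_{s(a)}$ with the other two source marginals held fixed. Your two additions sharpen the paper's looser wording: the dropped marginal is actually free on all of $[0,1]$, since Fr\'echet--Hoeffding bounds constrain joint cells rather than marginals, and an independent coupling makes the whole observed law coincide. One small correction: once the rule sees finite paired draws instead of population marginals, identical observed laws give only the $\ge\tfrac12$ floor, not deterministic sup-error $1$, but that floor already suffices for necessity.
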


\begin{remark}[Cross-cohort book-keeping]
\label{rem:transfer}
If certificates at level $\alpha$ are issued independently on cohorts $C_0,C_1$,
the joint event ``both certificates correct'' holds at level $\le 2\alpha$ by
union bound. Under exchangeability of the $(\clean,\rand,\zero,\cross)$ joint at
$(\ell_g,K)$ across $C_0\cup C_1$, the $C_1$ label agrees with $C_0$ in
distribution; non-trivial transfer requires re-evaluating Wilson coverage on
$C_1$ rather than reusing $C_0$.
\end{remark}

\begin{theorem}[Matching-rate sample complexity for source-contrast identification]
\label{thm:sample-complexity}
Fix $(\ell_g,K)$, $\delta>0$, $\alpha\in(0,1/4)$, separation $\Delta>0$. Let
$\bar p{=}p_{\clean}(\ell_g,K)$ and let $q_0{=}\bar p-\delta$ denote the
random-source recovery rate at the $H_0/H_1$ boundary; assume
$q_0\pm\Delta\in[\eta',1{-}\eta']$ for some $\eta'>0$. For
$H_0\!:D_{\rand}\le\delta-\Delta$ vs.\ $H_1\!:D_{\rand}\ge\delta+\Delta$ audited
by $n$ paired patches at $(\ell_g,K)$, the Bretagnolle-Huber Bayes-risk bound
gives $\inf_\phi\sup_{\mathcal{M}\in\mathcal{E}(p_{\clean})}\mathbb{P}_{\mathcal{M}}[\phi\,\text{errs}]
\ge\tfrac14 e^{-n\,\mathrm{KL}(q_0-\Delta\|q_0+\Delta)}$. The Bernoulli $\chi^2$
inequality gives the global upper bound
$\mathrm{KL}(q_0-\Delta\|q_0+\Delta)\le 4\Delta^2/[(q_0+\Delta)(1{-}q_0-\Delta)]\le 4\Delta^2/[\eta'(1{-}\eta')]$.
Worst-case error $\le\alpha$ thus requires
$n\ge\log(1/(4\alpha))/\mathrm{KL}\ge\eta'(1{-}\eta')\Delta^{-2}\log(1/(4\alpha))/4$;
the Wilson plug-in achieves error $\le\alpha$ at
$n\ge2\Delta^{-2}\log(2/\alpha)$ by Hoeffding on the paired estimator. Thus
$n_\star(\alpha,\Delta)=\Theta(\Delta^{-2}\log(1/\alpha))$ with constants
depending only on $\eta'$ (lower / upper ratio is $O(\eta'(1{-}\eta'))$); the
lower bound is informative for $\alpha<1/4$ (vacuous as $\alpha\to1/4$).
Proposition~\ref{prop:nonidentifiability} is the $\Delta\to0$ limit;
sub-Gaussian recovery scores follow with the binary KL replaced by
$2\Delta^2/\sigma^2$.
\end{theorem}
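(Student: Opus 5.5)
The proof splits into a lower bound and an upper bound, followed by rate matching. All models are taken inside $\mathcal{E}(p_{\clean})$, so the clean marginal is pinned at $\bar p$ and only the random-source marginal moves; by Proposition~\ref{prop:per-axis} this can be done without disturbing the other marginals.

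\textbf{Lower bound.} Reduce to a two-point testing problem. Build $\mathcal{M}_0,\mathcal{M}_1\in\mathcal{E}(p_{\clean})$ as follows.
\begin{itemize}
\item Under $\mathcal{M}_0$, $R_{\rand}\sim\mathrm{Bern}(q_0+\Delta)$, so $D_{\rand}=\delta-\Delta$ and $H_0$ holds.
\item Under $\mathcal{M}_1$, $R_{\rand}\sim\mathrm{Bern}(q_0-\Delta)$, so $D_{\rand}=\delta+\Delta$ and $H_1$ holds.
\item In both, $R_{\clean}\sim\mathrm{Bern}(\bar p)$, coupled to $R_{\rand}$ by the same copula family. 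Choosing independence makes the clean coordinate ancillary.
\end{itemize}
With $n$ i.i.d.\ paired observations, the KL between the two product laws is $n\,\mathrm{KL}(q_0-\Delta\|q_0+\Delta)$, because the clean coordinates cancel. The sup-risk is at least the uniform-prior Bayes risk. Le Cam's bound then gives Bayes risk $\ge\tfrac12(1-\mathrm{TV})$, and Bretagnolle--Huber gives $1-\mathrm{TV}\ge\tfrac12 e^{-\mathrm{KL}}$. Together these yield the stated $\tfrac14 e^{-n\mathrm{KL}}$.

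\textbf{KL bound and sample size.} Use $\mathrm{KL}\le\chi^2$. For Bernoulli laws,
\[
\chi^2(\mathrm{Bern}(a)\|\mathrm{Bern}(b))=\frac{(a-b)^2}{b(1-b)},
\]
and with $a-b=-2\Delta$ this is $4\Delta^2/[(q_0+\Delta)(1-q_0-\Delta)]$. Since $q_0+\Delta\in[\eta',1-\eta']$, the denominator is at least $\eta'(1-\eta')$. Setting $\tfrac14 e^{-n\mathrm{KL}}\le\alpha$ requires $n\ge\log(1/(4\alpha))/\mathrm{KL}$, and substituting the KL bound gives the explicit $\eta'(1-\eta')\Delta^{-2}\log(1/(4\alpha))/4$. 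This is positive only when $\alpha<1/4$, which is the vacuity remark in the statement.

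\textbf{Upper bound.} Use the paired estimator $\widehat D=\frac1n\sum_i(R_{i,\clean}-R_{i,\rand})$ and decide $H_1$ iff $\widehat D>\delta$. Each summand lies in $[-1,1]$, a range of width $2$. Hoeffding therefore gives
\[
\mathbb{P}\big(|\widehat D-D|\ge\Delta\big)\le 2e^{-2n\Delta^2/4}=2e^{-n\Delta^2/2}.
\]
Under either hypothesis an error requires a deviation of at least $\Delta$, so the error is at most $2e^{-n\Delta^2/2}$. This is $\le\alpha$ once $n\ge2\Delta^{-2}\log(2/\alpha)$. For the Wilson/bootstrap plug-in, I would argue that its interval endpoint decision is no worse than this threshold rule up to constants, or simply certify the threshold rule itself. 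Comparing the two sample sizes shows both are $\Delta^{-2}\log(1/\alpha)$ up to constants depending only on $\eta'$, with ratio $O(\eta'(1-\eta'))$.

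\textbf{Remaining remarks.} The $\Delta\to0$ limit sends $\mathrm{KL}\to0$, so the lower bound tends to $\tfrac14$ for every $n$. This recovers the non-identifiability floor of Proposition~\ref{prop:nonidentifiability} in weaker, finite-sample form. For sub-Gaussian scores, replace the two Bernoulli laws by Gaussian shifts $\pm\Delta$ with variance $\sigma^2$, whose KL is $2\Delta^2/\sigma^2$, and use the sub-Gaussian tail in place of Hoeffding.

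\textbf{Main obstacle.} The delicate step is keeping the lower-bound constructions inside $\mathcal{E}(p_{\clean})$ while ensuring that only the random marginal differs. Naively, the clean--random coupling could leak information through the pairs. Independence handles this, and I would note that Fréchet--Hoeffding compatibility (Proposition~\ref{prop:per-axis}) is automatic for independent couplings. A second care point is reconciling the Hoeffding range, since the paired differences lie in $[-1,1]$, with the stated constant $2\Delta^{-2}\log(2/\alpha)$.
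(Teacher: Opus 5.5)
Your proposal is correct and follows the paper's proof sketch essentially step for step: an independent clean--random coupling inside $\mathcal{E}(p_{\clean})$ so that only the random-source marginal contributes to the KL, the two-point Le Cam reduction with Bretagnolle--Huber giving $\tfrac14 e^{-n\,\mathrm{KL}(q_0-\Delta\|q_0+\Delta)}$, the Bernoulli $\chi^2$ bound on that KL, and Hoeffding on the $[-1,1]$-valued paired differences for $n\ge 2\Delta^{-2}\log(2/\alpha)$. As in the paper, what is actually certified is the threshold rule on the paired estimator $\widehat D_{\rand}$ rather than the Wilson plug-in itself, so your fallback of certifying the threshold rule is exactly what the paper's argument delivers.
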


\begin{proof}[Sketch]
The lower-bound construction chooses two priors
$\mathcal{M}_A,\mathcal{M}_B\in\mathcal{E}(p_{\clean})$ with
$R_{\clean}\perp R_{\rand}$ and $\mathbb{P}[R_{\rand}=1]\in\{q_0-\Delta,q_0+\Delta\}$,
so the per-sample KL between the two paired-Bernoulli joints reduces to
$\mathrm{KL}(q_0-\Delta\|q_0+\Delta)$ for the random-source marginal. The
$\chi^2$ inequality $\mathrm{KL}(p\|q)\le(p-q)^2/[q(1-q)]$ applied to
$p=q_0-\Delta$, $q=q_0+\Delta$ yields the global KL upper bound;
Bretagnolle-Huber gives the $\tfrac14 e^{-n\mathrm{KL}}$ Bayes-risk lower
bound. The paired estimator $\widehat{D}_{\rand}=\overline{R_{\clean}-R_{\rand}}$
has per-sample range $[-1,1]$, so Hoeffding's inequality gives
$\mathbb{P}[|\widehat{D}_{\rand}-D_{\rand}|>\Delta]\le 2\exp(-n\Delta^2/2)$;
solving for $\le\alpha$ yields $n\ge2\Delta^{-2}\log(2/\alpha)$. Positive
within-pair correlation $\rho=\mathrm{Cov}(R_{\clean},R_{\rand})>0$ further
tightens the upper bound through Bernstein's inequality; the lower bound
construction is independent of $\rho$ by design.
\end{proof}

\begin{figure}[t]
\centering
\begin{tikzpicture}[x=1.0cm,y=0.70cm,>=stealth,font=\footnotesize]
\draw[->,thick] (0,0) -- (6.5,0) node[right]{$\Delta$ (pp)};
\draw[->,thick] (0,0) -- (0,5.6);
\node[anchor=south west,font=\scriptsize] at (-0.2,5.65) {$n_\star$ (samples)};
\foreach \x/\v in {1/10,2/15,3/20,4/25,5/30}{%
  \draw (\x,0.06) -- (\x,-0.06) node[below=1pt,font=\scriptsize]{\v};}
\foreach \y/\v in {0/30,1/60,2/120,3/240,4/480}{%
  \draw (0.06,\y) -- (-0.06,\y) node[left=2pt,font=\scriptsize]{\v};}
\draw[dotted,thick,black!55] (0.2,1.74) -- (6.0,1.74);
\node[black!70,anchor=south west,font=\scriptsize] at (0.25,1.78) {$n{=}100$};
\draw[thick,blue] (1,4.11) -- (2,3.04) -- (3,2.27) -- (4,1.66) -- (5,1.18) -- (5.5,0.93);
\node[blue,anchor=south west,font=\scriptsize] at (3.6,2.85) {Hoeffding upper};
\draw[thick,red,dashed] (1,1.51) -- (2,0.34);
\node[red,anchor=west,font=\scriptsize] at (2.15,0.75) {Le Cam lower};
\fill[red] (1.40,1.74) circle (2.5pt);
\node[red,anchor=south,font=\scriptsize] at (1.40,2.30) {Llama3-FS};
\fill[blue] (4.46,1.74) circle (2.5pt);
\node[blue,anchor=south,font=\scriptsize] at (4.46,2.30) {Llama3-PZ};
\end{tikzpicture}
\caption{Sample-complexity scaling for the source-contrast audit at
$\alpha{=}0.15$, $\eta'{=}0.4$. The blue curve is the Hoeffding upper bound
$n_\star\!\le\!2\Delta^{-2}\log(2/\alpha)$; the red dashed curve is the
Bretagnolle--Huber lower bound
$n_\star\!\ge\!\eta'(1{-}\eta')\Delta^{-2}\log(1/(4\alpha))/4$. Both grow as
$\Delta^{-2}$, with constants depending only on $\eta'$. Filled markers show
the AUDIT cohort size $n{=}100$ at the empirical $\Delta=|\delta-\widehat
D_{\rand}|$ for two cells: Llama3-PZ ($\Delta\!\approx\!22$ pp, well above the
upper-bound curve, identity-light is easy) and Llama3-FS ($\Delta\!\approx\!7$
pp, single-seed candidate near the upper-bound regime).}
\label{fig:sample-complexity}
\end{figure}

\begin{proposition}[Pre-registered certificate Type-I control]
\label{prop:type-one}
Fix $(\tau,p^{\mathrm{crit}}_{\mathrm{bonf}},\delta,\alpha_{\mathrm{sel}},\alpha_{\mathrm{audit}})$
before any data is observed, with $\alpha=\alpha_{\mathrm{sel}}+\alpha_{\mathrm{audit}}$.
SELECT runs on a cohort $C_{\mathrm{sel}}$ disjoint from the AUDIT cohort
$C_{\mathrm{aud}}$ (assumption A1) and controls $p_{\mathrm{bonf}}\le\alpha_{\mathrm{sel}}$
on the within-cell permutation null over the $K$-grid (Bonferroni over $|K|=5$);
AUDIT then uses paired-bootstrap percentile intervals at level
$1-\alpha_{\mathrm{audit}}/3$ on the three axes of Definition~\ref{def:typology}.
Under A1, total mislabel probability is $\le\alpha$ by union bound on the two
error events. The disjointness in A1 is used to ensure that the AUDIT-stage
paired-bootstrap Wilson coverage is valid as a plug-in (the SELECT-stage
hook-selection randomness does not propagate into the AUDIT cohort); union
bound itself only requires $P(A\cup B)\le P(A)+P(B)$.
\end{proposition}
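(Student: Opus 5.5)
The plan is to decompose the event ``the certificate emits an incorrect Definition~\ref{def:typology} label'' into a SELECT error and an AUDIT error, bound each separately, and combine them by a union bound.

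\textbf{The two error events.} Let $E_{\mathrm{sel}}$ be the event that SELECT certifies a localized band, with $p_{\mathrm{bonf}}\le p^{\mathrm{crit}}_{\mathrm{bonf}}$ and a positive held-out gap, in a cell where the within-cell permutation null holds. Let $E_{\mathrm{aud}}$ be the event that, at the frozen hook $(\ell_g,K)$, at least one of the three one-sided intervals for $D_{\rand},D_{\zero},D_{\cross}$ fails to cover its true value. On the complement of both events, every bound $\overline{D}_s,\underline{D}_s$ brackets the true $D_s(\ell_g,K)$. The labeling rules in Definition~\ref{def:typology} are monotone threshold rules: identity-light requires $\overline{D}_{\rand}\le\delta$, strong requires $\underline{D}_{\rand}\ge\delta$, and the magnitude and layer flags require $\underline{D}_s>0$. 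So under coverage, each emitted label is implied by the corresponding population statement, e.g.\ $D_{\rand}\le\delta$ or $D_{\zero}>0$. This gives
\[
\{\text{mislabel}\}\subseteq E_{\mathrm{sel}}\cup E_{\mathrm{aud}},
\]
and I would check this inclusion case by case over the priority ordering.

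\textbf{Bounding $E_{\mathrm{sel}}$.} Under the pre-registered permutation null, each of the $|K|=5$ per-$K$ permutation $p$-values is super-uniform. Bonferroni then gives $P(E_{\mathrm{sel}})\le\alpha_{\mathrm{sel}}$, because the thresholds $(\tau,p^{\mathrm{crit}}_{\mathrm{bonf}})$ were fixed before any data was seen, so no data-dependent tuning can inflate the level.

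\textbf{Bounding $E_{\mathrm{aud}}$.} This is the step I expect to be the main obstacle. Under A1, the hook $(\ell_g,K)$ is a measurable function of $C_{\mathrm{sel}}$, which is independent of $C_{\mathrm{aud}}$. I would therefore condition on $C_{\mathrm{sel}}$. Given that conditioning, the hook is fixed, and the paired differences $R_{\clean,i}-R_{s,i}$ on $C_{\mathrm{aud}}$ are i.i.d.\ with mean $D_s(\ell_g,K)$. Each one-sided level-$(1-\alpha_{\mathrm{audit}}/3)$ interval then has miscoverage at most $\alpha_{\mathrm{audit}}/3$. A union bound over the three axes gives $P(E_{\mathrm{aud}}\mid C_{\mathrm{sel}})\le\alpha_{\mathrm{audit}}$, and integrating over $C_{\mathrm{sel}}$ gives the unconditional bound.

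The delicate part is that percentile-bootstrap coverage is only asymptotic. I would handle this in one of two ways. The first is to state the guarantee as asymptotic, or as holding under the assumed validity of the plug-in interval, which is how the proposition phrases it. The second is to replace the interval by a Hoeffding interval of half-width $\sqrt{2\log(3/\alpha_{\mathrm{audit}})/n}$, which gives exact finite-sample validity and matches the upper-bound argument of Theorem~\ref{thm:sample-complexity}.

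\textbf{Combining.} The final step is just $P(E_{\mathrm{sel}}\cup E_{\mathrm{aud}})\le\alpha_{\mathrm{sel}}+\alpha_{\mathrm{audit}}=\alpha$. This union bound needs no independence. Disjointness (A1) enters only through the conditioning argument in the AUDIT bound.
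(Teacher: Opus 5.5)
Your proposal is correct and follows the paper's own argument: split the mislabel event into a SELECT false-certification event (Bonferroni over the five permutation $p$-values at level $\alpha_{\mathrm{sel}}$) and an AUDIT miscoverage event (three axes at $\alpha_{\mathrm{audit}}/3$ each, valid conditionally on $C_{\mathrm{sel}}$ by A1), then union-bound, which needs no independence; your explicit inclusion $\{\text{mislabel}\}\subseteq E_{\mathrm{sel}}\cup E_{\mathrm{aud}}$ and your plug-in versus Hoeffding caveat make explicit what the paper leaves implicit. One bookkeeping point: the identity axis uses both $\overline{D}_{\rand}$ and $\underline{D}_{\rand}$, so either make the $D_{\rand}$ interval two-sided at level $1-\alpha_{\mathrm{audit}}/3$ (for your Hoeffding variant, half-width $\sqrt{2\log(6/\alpha_{\mathrm{audit}})/n}$) or note that for any fixed true $D_{\rand}$ only one side can produce a wrong identity label, which keeps that axis within its $\alpha_{\mathrm{audit}}/3$ budget.
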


\begin{remark}[Relation to interchange-intervention frameworks]
\label{rem:interchange}
Causal-abstraction work~\cite{geiger2021abstractions,wu2024dasdistributed}
formalizes interchange interventions, with DAS using distributed multi-source
alignment; causal-mediation analysis of LMs~\cite{vig2020causal} reports
indirect/direct effects; localization critiques~\cite{hase2023localization}
warn that localization need not equal stored content. Our contribution is a
finite-sample audit recipe for the random/zero/cross-layer source contrasts
under Wilson coverage; we instantiate the Hase~\citeyear{hase2023localization}
non-mediation warning in the numeric CoT-hijack setting with a quantitative
worst-case bound.
\end{remark}

\begin{definition}[Source-control certificate]
\label{def:certificate}
A \emph{source-control certificate} for $(\ell_g,K)$ is a tuple
$\big(\mathcal{B},\widehat{\Delta}_{\mathrm{ho}},
\{\widehat{D}_{s}\}_{s\in\{\rand,\zero,\cross\}};\,
\tau,p^{\mathrm{crit}}_{\mathrm{bonf}},\delta,L\big)$ with
$(\tau,p^{\mathrm{crit}}_{\mathrm{bonf}},\delta)$ pre-registered before AUDIT
and entries satisfying: $\mathcal{B}=\widehat{\mathcal{B}}_{\clean}(K;\tau)$
with $p_{\mathrm{bonf}}\le p^{\mathrm{crit}}_{\mathrm{bonf}}$;
$\widehat{\Delta}_{\mathrm{ho}}>0$; paired-bootstrap Wilson 95\% intervals
$\{\widehat{D}_{s}\}$; $L$ from Definition~\ref{def:typology}. SELECT--FREEZE--AUDIT:
clean-source $K$-shot sweep + $5{,}000$ random 50/50 splits (SELECT/Validate);
fix $(\ell_g,K)$ (FREEZE); $5{,}000$ paired bootstraps on
$\rand,\zero,\cross$ at $(\ell_g,K)$ (AUDIT). AUDIT soundness follows from
Theorem~\ref{thm:sample-complexity}; SELECT--FREEZE validity from sample-split
independence.
\end{definition}

\subsection{Selection-Aware Band Validation}

We validate bands on held-out splits: for each 50/50 split the train half selects $\widehat{\mathcal{B}}_{\mathrm{tr}}(K;\tau)=\{\ell:\widehat{p}_{\clean,\mathrm{tr}}(\ell,K)\ge\tau\}$ and the held-out gap is
\begin{equation}
\widehat{\Delta}_{\mathrm{ho}}=
\overline{\widehat{p}_{\clean,\mathrm{ho}}}\big|_{\widehat{\mathcal{B}}_{\mathrm{tr}}}
-\overline{\widehat{p}_{\clean,\mathrm{ho}}}\big|_{\mathcal{L}\setminus\widehat{\mathcal{B}}_{\mathrm{tr}}}.
\label{eq:heldout-gap}
\end{equation}
Confirmatory localization requires Bonferroni-corrected within-cell permutation
evidence and $\widehat{\Delta}_{\mathrm{ho}}>0$; source controls are computed
only after the hook is frozen.

\section{Experimental Protocol}
\label{sec:setup}

\paragraph{Models.}
We evaluate Qwen2.5-7B-Instruct~\cite{qwen2024qwen25} and
Llama3-8B-Instruct~\cite{llama3herd2024}. The main cross-architecture grid
uses INT4 NF4 quantization for single-GPU feasibility. A BF16 Qwen-puzzle
sweep tests whether the main band-level result is a quantization artifact.

\paragraph{Hijack families.}
We use three controlled answer-hijack templates. Few-shot hijacks include
mislabeled exemplars. Puzzle hijacks interleave a misleading riddle with the
user's arithmetic question. Sycophant hijacks prefix the problem with a
confident but incorrect user claim. All templates are instantiated on filtered
benchmark problems and scored only by final numeric answer recovery.

\paragraph{Benchmarks and qualification.}
GSM8K~\cite{cobbe2021gsm8k} is the main benchmark. A problem enters a cell
only when the clean model answer matches the gold label and the hijacked answer
does not. MATH-500 is used for frozen transfer and one Qwen-puzzle source-control
audit. Numeric equivalence handles integer and decimal variants,
comma-separated numerals, simple fractions, degree notation, and final-answer
equations. Qualification yields the diagnostic cohorts used throughout: 236
scanned/187 clean-correct/100 qualified for Qwen-PZ on GSM8K (qualification
rate 42.4\%); 404/303/100 for Llama3-PZ on GSM8K (24.8\%); 500/--/50 for the
Qwen-PZ MATH-500 frozen transfer (10.0\%); and 500/82/50 for the Qwen-PZ
MATH-500 source audit (10.0\%, with 40 cases reused from the fixed transfer
cohort under degree-normalized equivalence and 10 deterministic replacements
drawn by the same rule). All reported recovery estimates are conditional on
clean-correct and hijacked-wrong examples. For each GSM8K puzzle cell, the
SELECT cohort ($n{=}30$ design sweep) and the AUDIT cohort ($n{=}100$ paired
source patches) are drawn under the same qualification rule but use disjoint
problem indices, satisfying the disjointness assumption (A1) of
Proposition~\ref{prop:type-one}.

\paragraph{Layer and patch grid.}
The sweep probes every other layer plus the final layer:
$\{0,2,4,\ldots,L-2,L-1\}$. Patch lengths are
$K\in\{1,2,4,8,16\}$. The canonical short-patch analysis uses $K{=}1$; longer
$K$ values are used to characterize temporal diffusion, especially for
sycophant hijacks.

\paragraph{Statistics.}
Recovery rates use 95\% Wilson; paired source contrasts use $5{,}000$-resample
paired-bootstrap percentile intervals at level $1-\alpha_{\mathrm{audit}}/3$.
Default budgets are $\alpha_{\mathrm{sel}}=0.05$ (within-cell permutation
$p_{\mathrm{bonf}}\le0.05$ over the five $K$ values) and
$\alpha_{\mathrm{audit}}=0.15$, giving total Type-I $\alpha=\alpha_{\mathrm{sel}}+\alpha_{\mathrm{audit}}=0.20$
and per-axis level $1-\alpha_{\mathrm{audit}}/3=0.95$, aligned with the
nominal Wilson 95\% bounds; tighter budgets (e.g.\ $\alpha=0.10$, 98.3\%
bounds) are reported in the supplement. Layer localization uses within-cell
layer-permutation nulls ($5{,}000$ permutations, $p<0.001$ is the floor);
Bonferroni over the five $K$ values defines confirmatory at
$p_{\mathrm{bonf}}\le0.05$, exploratory at $0.05<p_{\mathrm{bonf}}\le0.20$.
Multi-seed random controls resample the seed inside each bootstrap draw.
Selection-aware band intervals use $5{,}000$ random
50/50 splits and summarize conditional diagnostic variability, not independent
replications. Table~\ref{tab:reproducibility} summarizes the settings needed to
replicate the diagnostic protocol.

\begin{table*}[t]
\caption{Reproducibility details for the diagnostic protocol. These settings apply unless a table explicitly states a precision or dataset transfer check.}
\label{tab:reproducibility}
\centering
\scriptsize
\begin{tabular}{p{0.18\linewidth}p{0.74\linewidth}}
\toprule
Component & Setting \\
\midrule
Prompting & Chat template with a math-tutor system instruction; clean prompts ask for step-by-step solving and a final numeric answer; hijack prompts prepend the few-shot, puzzle, or sycophant wrapper. \\
Decoding & Greedy decoding; no sampling; maximum 300 new tokens for GSM8K source-control and fixed-hook runs, 350 new tokens for MATH-500 transfer/source-control runs. \\
Qualification & Include only examples whose clean answer matches the gold numeric answer and whose hijacked answer differs from the clean/gold answer. MATH-500 equivalence also normalizes degree notation before qualification. \\
Intervention & Register a forward hook on the selected transformer block output; replace the last-token hidden state for the first $K$ generated tokens; canonical short-patch setting is $K{=}1$. \\
Sources & Same-problem clean activation, matched-std random Gaussian activation, all-zero activation, cross-layer clean activation, cross-problem clean activation, and universal donor where reported. \\
Layer grid & Every other layer plus final layer for selection sweeps; source controls freeze the hook chosen by the clean-source sweep or fixed-hook replication. \\
Cohorts & Fixed-hook source audits reuse the frozen qualification cohort when it remains qualified under the same decoding and answer-equivalence rule; otherwise deterministic replacements are drawn from the benchmark by the same rule and reported in the qualification table. \\
Statistics & Wilson intervals for recovery rates; within-cell layer-permutation null with 5,000 trials and Bonferroni correction over five $K$ values; paired problem bootstrap with 5,000 resamples for source contrasts; held-out 50/50 split bands with 5,000 random splits for selection-aware validation. \\
Precision & Main grid uses INT4 NF4 quantization; a Qwen-puzzle BF16 sweep checks the band-level precision boundary. \\
Scoring & Deterministic numeric extraction and numeric equivalence; no LLM judge. \\
\bottomrule
\end{tabular}
\end{table*}

\paragraph{Threat and deployment scope.}
The evaluation assumes a white-box operator who can attach layer-level
hidden-state hooks to an open-weights LLM. The operator does not modify model
weights. The attacker controls the text prompt template but is not assumed to
adapt to the disclosed layer and patch length. We therefore evaluate diagnostic
recovery and mechanism interpretation, not adaptive robustness.

\section{Selection-Aware Localization Results}
\label{sec:exp-spread}

Table~\ref{tab:spread} reports the cross-architecture spread surface. The
headline localization result is narrow but stable. At $K{=}1$, Qwen-puzzle,
Llama3-fewshot, and Llama3-puzzle all pass the within-cell permutation test
with Bonferroni-corrected $p_{\mathrm{bonf}}\le0.05$. Their spreads are
0.57, 0.57, and 0.40, respectively. Qwen-fewshot has visible structure but
passes only the exploratory threshold at its best $K$; both sycophant cells
fail the short-patch localization criterion.

\begin{table*}[t]
\caption{Cross-architecture spread surface across 6 (model, hijack)
cells. Each cell is reported at $K{=}1$ (the canonical patch length
used in our recovery checks) and at the per-cell best $K$ from the design
sweep $K \in \{1,2,4,8,16\}$. Layer index $\Lstar$ is reported as
both raw layer and as a fraction of total depth $L$. The
$p_{\text{unc}}$ column is the within-cell layer-permutation null
($5{,}000$ trials per cell, conditional on per-problem propensities);
$p_{\text{bonf}}$ multiplies by 5 to correct for the K-sweep.
Cells with best-K $p_{\text{bonf}} \le 0.05$ are confirmatory localization
signals; $0.05 < p_{\text{bonf}} \le 0.20$ is reported only as an
exploratory small-$n$ signal.}
\label{tab:spread}
\centering
\scriptsize
\begin{tabular}{llcc | ccc | ccccc}
\toprule
& & & & \multicolumn{3}{c|}{at $K{=}1$} & \multicolumn{5}{c}{at best-$K$} \\
Model & Hijack & $L$ & $n$ & $\spreadKone$ & $\Lstar$ & $p_{\text{unc}}$ & $K$ & $\spreadK$ & $\Lstar$ & $p_{\text{unc}}$ & $p_{\text{bonf}}$ \\
\midrule
\multicolumn{12}{l}{\textit{Few-shot and puzzle hijacks}} \\
Qwen2.5-7B   & few-shot  & 28 & 19 & 0.32 & 16 & 0.14   & 4  & \textbf{0.42} & 27 & 0.031 & 0.156 \\
Qwen2.5-7B   & puzzle    & 28 & 30 & 0.57 & 26 & $<\!0.001$ & 1  & 0.57 & 26 & $<\!0.001$ & $<\!0.001$ \\
Llama3-8B    & few-shot  & 32 & 30 & 0.57 & 16 & $<\!0.001$ & 1  & 0.57 & 16 & $<\!0.001$ & $<\!0.001$ \\
Llama3-8B    & puzzle    & 32 & 30 & 0.40 & 14 & $<\!0.001$ & 1  & 0.40 & 14 & $<\!0.001$ & $<\!0.001$ \\
\midrule
\multicolumn{12}{l}{\textit{Sycophant hijacks}} \\
Qwen2.5-7B   & sycophant & 28 & 30 & 0.13 & 10 & 0.94   & 16 & \textbf{0.37} & 27 & 0.020 & 0.101 \\
Llama3-8B    & sycophant & 32 & 30 & 0.07 &  0 & 1.00   & 16 & 0.23 & 10 & 0.097 & 0.485 \\
\bottomrule
\end{tabular}
\par\vspace{4pt}
\footnotesize
At best-$K$ on the original 5-$K$ grid, 3/6 cells reach confirmatory
localization with Bonferroni-corrected $p_{\text{bonf}} \le 0.05$;
two additional cells fall in the exploratory small-$n$ range
$0.05 < p_{\text{bonf}} \le 0.20$. Qwen-sycophant localizes only at
$K{=}16$, consistent with reports that sycophancy emerges over multiple
tokens~\cite{wang2025sycoorigins}; Llama3-sycophant stays below
threshold on this grid; the dense-$K$ exploratory sweep is reported
in the supplement.
\end{table*}

Table~\ref{tab:selection-aware} shows why the paper treats bands rather than
exact layers as the estimand. For each random split, the band is selected on
one half and evaluated on the held-out half. Qwen-puzzle, Llama3-fewshot, and
Llama3-puzzle preserve held-out in-band minus out-of-band recovery gaps of
+32.6, +45.1, and +17.7 percentage points. Exact $\Lstar$ agreement is much
lower: 33.3\% for Qwen-puzzle, 0\% for Llama3-fewshot, and 49.3\% for
Llama3-puzzle. This pattern rules out a strong single-layer story while
supporting a band-level intervention claim.

\begin{table*}[t]
\caption{Selection-aware $K{=}1$ band validation over 5,000 random 50/50 splits. For each split, the thresholded band $\{\ell:\rec_{\mathrm{train}}(\ell,1)\ge .30\}$ is selected on the train half and evaluated on the held-out half. $\Delta$ is held-out in-band minus out-of-band recovery. Exact $\Lstar$ agreement reports how often train and held-out argmax layers match; the split intervals summarize conditional diagnostic variability rather than independent replications.}
\label{tab:selection-aware}
\centering
\scriptsize
\begin{tabular}{lccccc}
\toprule
Cell & $n$ & in-sample $\spreadKone$ & band non-empty & held-out $\Delta$ & exact $\Lstar$ agree \\
\midrule
Qwen-FS & 19 & 0.32 & 92.3\% & +5.5 [-9.2,+20.0] & 51.7\% \\
Qwen-PZ & 30 & 0.57 & 100.0\% & +32.6 [+19.3,+43.3] & 33.3\% \\
Qwen-SY & 30 & 0.13 & 25.1\% & -6.4 [-15.2,-0.5] & 0.0\% \\
Llama-FS & 30 & 0.57 & 100.0\% & +45.1 [+30.2,+60.3] & 0.0\% \\
Llama-PZ & 30 & 0.40 & 89.4\% & +17.7 [+4.0,+34.6] & 49.3\% \\
Llama-SY & 30 & 0.07 & 0.0\% & +0.0 [+0.0,+0.0] & 7.9\% \\
\bottomrule
\end{tabular}
\end{table*}

The negative sycophant result is also informative. Qwen-sycophant forms a
$K{=}1$ train-half band in only 25.1\% of random splits and has a negative
held-out band gap. Llama3-sycophant forms no $K{=}1$ train band at the
$\tau=.30$ threshold. Longer patches partially change this picture:
Qwen-sycophant reaches $\spreadK{=}0.37$ at $K{=}16$ with uncorrected
$p=0.020$ and Bonferroni-corrected $p=0.101$, while Llama3-sycophant remains
below confirmatory threshold on the original grid. We therefore classify
sycophancy as temporal-diffuse under the short-patch protocol rather than as a
failed measurement.

\paragraph{Precision check.}
The BF16 Qwen-puzzle sweep at $K{=}1$ over layers
$\{2,6,10,14,18,20,22,24,26,27\}$ gives recovery rates 13\%, 23\%, 13\%,
27\%, 33\%, 47\%, 40\%, 23\%, 30\%, and 40\%. The spread is 0.33 with a peak
at layer 20. Precision changes the exact peak but preserves a mid/late band,
which supports the band-level claim and weakens the quantization-artifact
objection.

\section{Source-Dependence Controls}
\label{sec:exp-fragility}

Proposition~\ref{prop:nonidentifiability} and Corollary~\ref{cor:source-controls}
turn the missing source axis into an empirical certificate: after a clean-source
sweep selects a useful intervention site, freeze the hook and audit non-clean
sources. The paired controls in Tables~\ref{tab:s7-control},~\ref{tab:paired-source},~\ref{tab:source-certificate},
and~\ref{tab:delta-sensitivity} compare clean-source patching with random, zero,
and cross-layer sources. The practical margin
$\delta=25$ percentage points is pre-registered before any AUDIT-stage data is
observed, together with $(\tau,p^{\mathrm{crit}}_{\mathrm{bonf}},\alpha_{\mathrm{sel}},\alpha_{\mathrm{audit}})$
from Proposition~\ref{prop:type-one}; nearby margins are shown in
Table~\ref{tab:delta-sensitivity} for sensitivity, not for post-hoc selection.
Mechanism labels are certificate-style summaries of paired confidence
intervals, not safety guarantees.

\begin{table}[t]
\caption{Margin sensitivity for the Llama3-puzzle replicated source diagnostic. CR is clean--random using the three-seed random mean; CZ is clean--zero.}
\label{tab:delta-sensitivity}
\centering
\scriptsize
\setlength{\tabcolsep}{3pt}
\begin{tabular}{cccc}
\toprule
$\delta$ & CR upper & CZ upper & Label \\
\midrule
10 pp & +12.0 & +25.0 & unresolved \\
15 pp & below & +25.0 & identity-light \\
20 pp & below & +25.0 & identity-light \\
25 pp & below & touches & mag.-sensitive \\
30 pp & below & below & loose source-invariance \\
\bottomrule
\end{tabular}
\end{table}

\begin{table*}[t]
\caption{How source controls change the interpretation of clean-source localization. The table separates the clean-only intervention claim from the mechanism label obtained after frozen-hook source controls.}
\label{tab:source-certificate}
\centering
\scriptsize
\setlength{\tabcolsep}{3pt}
\begin{tabular}{p{0.14\linewidth}p{0.23\linewidth}p{0.39\linewidth}p{0.16\linewidth}}
\toprule
Cell & Clean-only localization allows & Source controls reveal & Diagnostic label \\
\midrule
Llama3-PZ & Confirmatory localized recovery: held-out band gap +17.7 [+4.0,+34.6]; fixed-hook recovery 39/100. & $n{=}100$: clean 45\%, random 40--45\% (mean 42.3\%), zero 31\%, cross-layer 50\%. Clean$-$random mean +2.7 [-6.7,+12.0]; clean$-$zero +14 [+3,+25]. & identity-light, moderate magnitude \\
\addlinespace
Qwen-PZ & Confirmatory localized recovery: held-out band gap +32.6 [+19.3,+43.3]; fixed-hook recovery 47/100 on GSM8K and 13/50 on MATH-500. & GSM8K $n{=}100$: clean$-$random -2 [-15,+11]; clean$-$zero +20 [+9,+31]; clean$-$cross-layer +15 [+3,+27]. MATH-500 $n{=}50$: clean$-$random +4 [-12,+20]; clean$-$zero +10 [-6,+26]; clean$-$cross-layer +6 [-10,+22]. & identity-light, moderate magnitude, layer-sensitive on GSM8K; MATH magnitude borderline \\
\addlinespace
Llama3-FS & Confirmatory localized recovery: held-out band gap +45.1 [+30.2,+60.3]. & $n{=}100$: clean 70\%, random 52\%, zero 52\%, cross-layer 12\%. Clean$-$random +18 [+8,+28]; clean$-$zero +18 [+8,+28]; clean$-$cross-layer +58 [+47,+68]. & moderate positive clean-source advantage, moderate magnitude, layer-sensitive (single-seed candidate) \\
\addlinespace
Qwen-FS & Exploratory localized recovery only: held-out gap +5.5 [-9.2,+20.0]. & $n{=}22$: clean$-$random +3.0 [-18.2,+27.3]; clean$-$zero +4.5 [-18.2,+22.7]; intervals are too wide for a stronger mechanism label. & exploratory non-separation, layer-sensitive \\
\bottomrule
\end{tabular}
\end{table*}

\begin{table*}[t]
\caption{Paired source-control diagnostics at fixed $\Lstar,K$. Entries are paired recovery differences in percentage points with bootstrap 95\% CIs. Positive values favor clean-source patching under the tested control. Llama-PZ uses the replicated $n{=}100$ audit with a three-seed random mean.}
\label{tab:paired-source}
\centering
\scriptsize
\begin{tabular}{lcccc}
\toprule
Cell & $n$ & clean $-$ random & clean $-$ zero & clean $-$ cross-layer \\
\midrule
Qwen-FS & 22 & +3.0 [-18.2,+27.3] & +4.5 [-18.2,+22.7] & +31.8 [+13.6,+50.0] \\
Llama-FS & 100 & +18.0 [+8.0,+28.0] & +18.0 [+8.0,+28.0] & +58.0 [+47.0,+68.0] \\
Llama-PZ & 100 & +2.7 [-6.7,+12.0] & +14.0 [+3.0,+25.0] & -5.0 [-16.0,+6.0] \\
Qwen-PZ & 100 & -2.0 [-15.0,+11.0] & +20.0 [+9.0,+31.0] & +15.0 [+3.0,+27.0] \\
\bottomrule
\end{tabular}
\par\vspace{2pt}
\footnotesize
Llama-FS shows a positive clean-source advantage under the tested controls at $n{=}100$, although the clean--random gap is smaller than in the $n{=}30$ pilot. Llama-PZ and Qwen-PZ are identity-light against random sources but magnitude- or layer-sensitive under zero and cross-layer controls. Qwen-FS is finite-sample non-separated in the pooled audit and exploratory under held-out band validation.
\end{table*}

\begin{table}[t]
\caption{No-detected-clean-advantage controls at the Qwen-fewshot $K{=}1$ peak ($\Lstar{=}16$). The same numeric-qualified subset is evaluated under three random seeds.}
\label{tab:s7-control}
\centering
\scriptsize
\setlength{\tabcolsep}{3pt}
\begin{tabular}{lccc}
\toprule
Source & Seed-42 & Seed-42 95\% CI & 3-seed range \\
\midrule
Clean       & 10/22 (45.5\%) & [26.9, 65.3] & 45.5--45.5 \\
Random      &  8/22 (36.4\%) & [19.7, 57.0] & 36.4--50.0 \\
Zero        &  9/22 (40.9\%) & [23.3, 61.3] & 40.9--40.9 \\
Cross-layer &  3/22 (13.6\%) & [4.7, 33.3]  & 13.6--13.6 \\
\midrule
No patch    & 0/22 (0\%) & -- & 0 \\
\bottomrule
\end{tabular}
\par\vspace{4pt}
\footnotesize
Same-layer clean, random, and zero interventions do not separate under the reported intervals; the cross-layer source is much weaker. This is consistent with no-detected-clean-advantage layer localization in this cell, but it is not an equivalence test.
\end{table}

\paragraph{Llama3-puzzle: identity-light, moderate magnitude.}
Llama3-puzzle has confirmatory $K{=}1$ localization, a positive held-out band
gap, and fixed-hook large-$n$ recovery 39.0\% on the $n{=}100$ replication
cohort (Tab.~\ref{tab:large-n-checks}). The disjoint AUDIT $n{=}100$ paired
cohort reports clean-source patching at
45\%, random 40--45\% across three seeds (mean 42.3\%), zero 31\%, cross-layer
50\%. Clean$-$random$=+2.7$ [-6.7,+12.0] (below $\delta$, identity-light per
Definition~\ref{def:typology}); clean$-$zero$=+14$ [+3,+25] (lower 3>0, upper
at $\delta$, moderate magnitude); clean$-$cross-layer$=-5$ [-16,+6] (lower
below 0, no layer flag; the negative point estimate is consistent with
neighboring-layer activations sharing residual-stream content, so a different
layer can substitute for the selected hook in this cell).

\begin{table}[t]
\caption{Fixed-hook large-$n$ checks. Hooks are frozen from the GSM8K selection; no layer/K retuning is performed. Wilson 95\% CIs are shown in brackets.}
\label{tab:large-n-checks}
\centering
\scriptsize
\begin{tabular}{lcc}
\toprule
Cell / setting & recovery & Wilson 95\% CI \\
\midrule
Qwen-PZ, GSM8K, $n=100$ & 47.0\% (47/100) & [37.5\%, 56.7\%] \\
Llama-PZ, GSM8K, $n=100$ & 39.0\% (39/100) & [30.0\%, 48.8\%] \\
MATH-500 frozen transfer, $n=50$ & 26.0\% (13/50) & [15.9\%, 39.6\%] \\
\bottomrule
\end{tabular}
\end{table}

\paragraph{Qwen-puzzle: identity-light, moderate magnitude, layer-sensitive.}
GSM8K $n{=}100$ source audit: clean 40\%, random 42\%, zero 20\%, cross-layer
25\%. Clean$-$random$=-2$ [-15,+11] (identity-light), clean$-$zero$=+20$ [+9,+31]
(lower 9>0, upper 31>$\delta$, moderate magnitude), clean$-$cross-layer$=+15$
[+3,+27] (lower 3>0, layer-sensitive). The MATH-500 frozen-hook audit at
$\Lstar{=}26,K{=}1$ ($n{=}50$, 40 from the fixed transfer cohort + 10
deterministic replacements) recovers 28/24/18/22\% under clean/random/zero/
cross-layer; clean$-$random$=+4$ [-12,+20], clean$-$zero$=+10$ [-6,+26],
clean$-$cross-layer$=+6$ [-10,+22]; the random-source non-separation persists
while magnitude becomes borderline.

\paragraph{Llama3-fewshot: moderate positive clean-source advantage, moderate magnitude, layer-sensitive (single-seed candidate).}
In the $n{=}100$ AUDIT (single seed; multi-seed replication queued), clean
source recovers 70\%, random and zero each 52\%, cross-layer 12\%. Paired
clean$-$random and clean$-$zero are both +18 [+8,+28] (lower 8>0, upper
28>$\delta$, moderate positive advantage per Definition~\ref{def:typology});
clean$-$cross-layer is +58 [+47,+68]. The label is conditional on the single
seed (a re-seed could push $\underline{D}_{\rand}$ below 0 and migrate the cell
to wide-CI; see Limits).

\paragraph{Qwen-fewshot: exploratory non-separation, layer-sensitive.}
Qwen-fewshot shows qualitative non-separation but lacks the localization strength
and sample size needed for a stronger mechanism label. In the source-control
audit, the same numeric-qualified 22 problems are evaluated under clean, random,
zero, and cross-layer sources at the frozen $K{=}1$ peak. Clean recovery is
45.5\%, random recovery ranges from 36.4\% to 50.0\% across seeds, and zero
recovery is 40.9\%. The paired clean minus random contrast is +3.0 points
[-18.2,+27.3], and clean minus zero is +4.5 [-18.2,+22.7]. Cross-layer recovery
is much lower, with clean minus cross-layer +31.8 [+13.6,+50.0]; the lower
bound 13.6$>$0 satisfies the layer-sensitive flag of
Definition~\ref{def:typology}, so this flag is reported alongside the
exploratory identity label. The held-out band gap crosses zero and one upper
bound slightly exceeds the 25-point margin, so the identity axis remains
exploratory.

\section{Transfer, Donor, and Baseline Checks}
\label{sec:exp-defense}

\paragraph{Cross-problem and universal-donor recovery.}
Qwen-puzzle provides the strongest donor-reuse check. At the design-sweep hook
$\Lstar{=}26,K{=}1$, self-clean patching recovers 60\% of the original
$n{=}30$ set, while cross-problem clean-source patching recovers 43\%. A
mid-network control at layer 14 recovers only 13\% under self-clean patching
and 20\% under cross-problem patching. A single universal donor activation
recovers 55\% of $n{=}31$ Qwen-puzzle instances, close to the self-clean 61\%
point estimate and above the zero-source 16\% point estimate. These checks do
not prove equivalence between donor and clean source. They show that, in this
cell, problem-specific clean-source identity is not required for substantial
recovery (Tab.~\ref{tab:defense}).

\begin{table}[t]
\caption{Cross-problem source checks for Qwen2.5-7B puzzle at $\Lstar{=}26,K{=}1$ ($n{=}30$). Wilson 95\% CIs in brackets.}
\label{tab:defense}
\centering
\scriptsize
\begin{tabular}{@{}lcc@{}}
\toprule
Patch source layer $\ell$ & Self-clean & Cross-problem \\
\midrule
$\ell = 26$ ($\Lstar$, design sweep) & 60\% (18/30) [42, 76] & \textbf{43\%} (13/30) [27, 61] \\
$\ell = 14$ (mid-network control) & 13\% (4/30) [5, 30] & 20\% (6/30) [9, 38] \\
\bottomrule
\end{tabular}
\par\vspace{2pt}
\footnotesize
\textbf{Universal donor:} one fixed donor activation recovers
$n{=}31$ Qwen-puzzle instances at 55\% (17/31) [38, 71], vs.\ self-clean
61\% (19/31) [44, 76], random 39\% (12/31) [24, 56], and zero 16\%
(5/31) [7, 33].
\end{table}

\paragraph{Fixed-hook large-$n$ checks.}
The original $n{=}30$ design sweep selects hooks and estimates localization.
Table~\ref{tab:large-n-checks} freezes the hook and reruns the two primary
puzzle cells at $n{=}100$. Qwen-puzzle recovers 47.0\% with Wilson 95\% CI
[37.5\%, 56.7\%]. Llama3-puzzle recovers 39.0\% with CI [30.0\%, 48.8\%].
The fixed-hook results lower the point estimates relative to the selection
sweep but preserve nonzero recovery, which is the correct interpretation of a
selection-aware diagnostic.

\paragraph{Frozen dataset transfer.}
MATH-500 transfer freezes the GSM8K-selected Qwen-puzzle hook
$\Lstar{=}26,K{=}1$ with no target-dataset retuning. Recovery is 33.3\% at
$n{=}30$, 30.0\% at $n{=}40$, and 26.0\% at $n{=}50$. A degree-normalized
source-control audit on the same frozen hook recovers 28\% with clean source,
24\% with random source, 18\% with zero source, and 22\% with cross-layer source
at $n{=}50$. The transfer result is bounded but meaningful: it supports nonzero
recovery and preserves the random-source non-separation pattern outside GSM8K
without claiming dataset-universal robustness (Tab.~\ref{tab:math500-transfer}).

\begin{table}[t]
\caption{Frozen MATH-500 transfer check for Qwen-puzzle. The layer and patch length are fixed from GSM8K ($\Lstar{=}26$, $K{=}1$); no MATH-500 retuning is performed. Answers are scored with deterministic numeric-aware equivalence.}
\label{tab:math500-transfer}
\centering
\scriptsize
\begin{tabular}{lcc}
\toprule
Setting & recovery & Wilson 95\% CI \\
\midrule
MATH-500 Qwen-puzzle, $n=30$ & 33.3\% (10/30) & [19.2\%, 51.2\%] \\
MATH-500 Qwen-puzzle, $n=40$ & 30.0\% (12/40) & [18.1\%, 45.4\%] \\
MATH-500 Qwen-puzzle, $n=50$ & 26.0\% (13/50) & [15.9\%, 39.6\%] \\
\bottomrule
\end{tabular}
\end{table}

\paragraph{Baseline interventions.}
On Qwen-puzzle, the paired baseline suite compares patching with input-side and
prompt-side alternatives as intervention-type sanity checks, not as a defense
leaderboard. Patching recovers 60\% on the paired $n{=}30$ cell,
RESTA-style input-embedding smoothing recovers 23\%, same-model paraphrase
sanitization recovers 50\%, and a perplexity-threshold filter recovers 0\%.
A Llama3-puzzle baseline suite preserves the same qualitative pattern: patching
40.0\%, RESTA-style smoothing 10.0\%, paraphrase 33.3\%, and perplexity
filtering 0.0\%. These checks show that the hidden-state intervention is not
just generic input smoothing, but they are not a universal defense ranking.

\section{Discussion and Boundaries}
\label{sec:discussion}

\begin{figure}[t]
\centering
\begin{tikzpicture}[x=0.085cm,y=0.085cm,>=stealth,font=\footnotesize]
\draw[->,thick] (-32,0) -- (46,0) node[right]{$\underline{D}_{\rand}$ (pp)};
\draw[->,thick] (0,-32) -- (0,46) node[above]{$\underline{D}_{\zero}$ (pp)};
\draw[dashed,black!55] (25,-32) -- (25,42);
\draw[dashed,black!55] (-32,25) -- (42,25);
\node[black!70,font=\scriptsize,anchor=south west] at (25.6,30) {$\delta{=}25$};
\node[black!70,font=\scriptsize,anchor=south east] at (24.4,25.5) {$\delta{=}25$};
\foreach \x in {-20,-10,10,20,30}{\draw (\x,0.8) -- (\x,-0.8); \node[anchor=north,font=\scriptsize] at (\x,-1.5) {\x};}
\foreach \y in {-20,-10,10,20,30}{\draw (0.8,\y) -- (-0.8,\y); \node[anchor=east,font=\scriptsize] at (-1.5,\y) {\y};}
\node[font=\scriptsize\itshape,black!65,anchor=north west,align=left] at (-31,42)
  {identity-light $+$\\moderate magnitude};
\node[font=\scriptsize\itshape,black!65,anchor=north east,align=right] at (42,42)
  {strong\\both};
\node[font=\scriptsize\itshape,black!65,anchor=south west] at (-31,-31) {no flag};
\node[font=\scriptsize\itshape,black!65,anchor=south east] at (42,-31) {positive identity};
\fill[blue] (-15,9) circle (1.5);
\node[blue,font=\scriptsize,anchor=south] at (-15,11) {Qwen-PZ};
\fill[blue] (-6.7,3) circle (1.5);
\node[blue,font=\scriptsize,anchor=west] at (-5.0,3) {Llama3-PZ};
\fill[red] (8,8) circle (1.5);
\node[red,font=\scriptsize,anchor=south] at (8,10) {Llama3-FS};
\fill[black!55] (-18.2,-18.2) circle (1.5);
\node[black!70,font=\scriptsize,anchor=north] at (-18.2,-20) {Qwen-FS};
\end{tikzpicture}
\caption{Mechanism-label typology in the
$(\underline{D}_{\rand},\underline{D}_{\zero})$ plane: lower bounds of paired
95\% bootstrap CIs, in percentage points. Dashed lines mark the $\delta=25$ pp
practical margin. The two PZ cells (blue) sit in the identity-light $+$
moderate-magnitude quadrant; the single positive-advantage candidate
Llama3-FS (red) lies in the positive-identity $+$ moderate-magnitude region
but its $\underline{D}_{\rand}=8$ is well below $\delta$, consistent with the
single-seed caveat. Qwen-FS (gray) is in the no-flag quadrant with exploratory
identity status; its $\underline{D}_{\cross}=13.6$ (third axis, not shown)
earns a layer-sensitive flag.}
\label{fig:typology-scatter}
\end{figure}

\paragraph{What the theorem changes.}
Clean-source recovery flags useful intervention sites, not semantic mediators.
Source controls split cases: identity-light with moderate magnitude/layer
sensitivity (Llama3-PZ, Qwen-PZ), moderate positive clean-source advantage
(Llama3-FS), exploratory (Qwen-FS); Qwen-PZ donor checks confirm that
substantial recovery does not always require a matched clean trace. Band
localization is the stable estimand: held-out gaps stay positive while $\Lstar$
agreement ranges 0--49.3\%, and BF16 reproduces the band.

\paragraph{Sycophancy as a temporal-diffuse boundary.}
Sycophant cells mark the short-patch boundary: neither model forms a $K{=}1$
band, and Qwen-sycophant localizes only at longer $K$, consistent with
sycophancy as a multi-stage process~\cite{wang2025sycoorigins,sharma2024sycophancy}.

\paragraph{Limits of the evidence.}
We use two 7B--8B instruction-tuned families on GSM8K with MATH-500 as transfer.
Per-cell random seeds: Llama3-PZ 3, Qwen-PZ 1 (multi-seed planned in Suppl.),
Llama3-FS 1, Qwen-FS 3; cross-cell label differences therefore partly reflect
power, especially the moderate-vs-strong magnitude distinction. Qwen-PZ
replicates the random-source non-separation at $n{=}100$ on GSM8K and at $n{=}50$
on MATH-500 (zero-source borderline); Qwen-FS does not meet the $\delta$
margin. The BF16 precision check is restricted to Qwen-PZ; BF16 sweeps for
Llama3-PZ, Llama3-FS, and Qwen-FS are queued for the arXiv full version. A
token-id-preserving one-token padding stress on a four-problem Qwen-FS smoke
set reduces recovery 50\%$\to$25\%, but random padding does the same; this is
a boundary anecdote, not an optimized-bypass result. Multi-token
gradient-based attacks remain outside scope.

\paragraph{Practical implication.}
Localization claims should state whether they are single-layer or band-level,
how the layer was selected, whether the hook was frozen before reruns, and how
clean source compares with random, zero, cross-layer, and donor controls. A
source-control claim should name its margin $\delta$ and show Wilson upper
bounds on clean-source advantage fall below $\delta$; otherwise the safer label
is identity-light or moderate magnitude. Without these controls, patching
success is intervention utility, not semantic mediation.

\section{Conclusion}
\label{sec:conclusion}

We presented an identifiability + empirical diagnostic study of activation
patching under numeric CoT answer hijacks. Clean-source localization
underidentifies the frozen-hook source contrast (Proposition~\ref{prop:nonidentifiability});
the tight $\Theta(\Delta^{-2}\log(1/\alpha))$ sample complexity
(Theorem~\ref{thm:sample-complexity}) and per-axis necessity
(Proposition~\ref{prop:per-axis}) make a SELECT--FREEZE--AUDIT
certificate at level $\alpha$ feasible from finite samples. Empirically, three
few-shot/puzzle cells pass confirmatory $K{=}1$ localization; source controls
split mechanisms (Llama3-PZ identity-light with moderate magnitude, Qwen-PZ
identity-light with moderate magnitude and layer-sensitive, Llama3-FS moderate
positive clean-source advantage with moderate magnitude and layer-sensitive
(single-seed candidate),
Qwen-FS exploratory, layer-sensitive). Activation patching can identify \emph{where} a hijack is
vulnerable; identifying the source axis requires the certificate.

\begingroup
\sloppy

\section{Extended Tables and Diagnostics}
\label{sec:extended}
The following tables are arXiv-only diagnostics that did not fit the AAAI
9-page main text.

\bibliography{refs}
\endgroup

\end{document}